\newtheorem{definition}{Definition}[section]
\newtheorem{remark}{Remark}[section]
\newtheorem{theorem}{Theorem}[section]
\newcommand{\eqdef} {\mbox{$\:\stackrel{\triangle}{=}\:$}}
\newcommand{\bq}{{\mathbf{q}}}
\newcommand{\bx}{{\mathbf{x}}}
\newcommand{\by}{{\mathbf{y}}}
\newcommand{\bZ}{{\mathbf{Z}}}
\newcommand{\bu}{{\mathbf{u}}}
\newcommand{\bv}{{\mathbf{v}}}
\newcommand{\bs}{{\mathbf{s}}}
\newcommand{\bY}{{\mathbf{Y}}}
\newcommand{\bbZ}{{\mathbb{Z}}}
\newcommand{\bbR}{{\mathbb{R}}}
\newcommand{\bS}{{\mathbf{S}}}
\newcommand{\cM}{{\mathcal{M}}}
\newcommand{\cN}{{\mathcal{N}}}
\newcommand{\cC}{{\mathcal{C}}}
\newcommand{\cS}{{\mathcal{S}}}
\newcommand{\cQ}{{\mathcal{Q}}}
\newcommand{\cR}{{\mathcal{R}}}
\newcommand{\cU}{{\mathcal{U}}}
\newcommand{\cV}{{\mathcal{V}}}
\newcommand{\cW}{{\mathcal{W}}}
\newcommand{\hW}{{\hat{W}}}
\newcommand{\hM}{{\hat{M}}}
\newcommand{\hq}{{\hat{q}}}
\newcommand{\hQ}{{\hat{Q}}}
\newcommand{\hbq}{{\hat{\mathbf{q}}}}
\newcommand{\cX}{{\mathcal{X}}}
\newcommand{\cY}{{\mathcal{Y}}}
\newcommand{\ep}{{\epsilon}}
\newcommand{\be}{\begin{equation}}
\newcommand{\ee}{\end{equation}}
\newcommand{\bea}{\begin{eqnarray}}
\newcommand{\eea}{\end{eqnarray}}
\newcommand{\bean}{\begin{eqnarray*}}
\newcommand{\eean}{\end{eqnarray*}}
\newcommand{\ben}{\begin{enumerate}}
\newcommand{\een}{\end{enumerate}}
    \def\squarebox#1{\hbox to #1{\hfill\vbox to #1{\vfill}}}
\begin{document}

\bibliographystyle{ieeetr}
\baselineskip=2.0\normalbaselineskip

\title{An Information Theoretic Analysis of Single Transceiver Passive RFID Networks}

\author{Y\"{u}cel Altu\u{g},~\IEEEmembership{}
        S. Serdar Kozat,~\IEEEmembership{Member}
        M. K{\i}van\c{c} M{\i}h\c{c}ak~\IEEEmembership{Member}%
\thanks{Y.~Altu\u{g} is with the School of Electrical and Computer Engineering of Cornell University, Ithaca, NY, 14853, USA. (e-mail: ya68@cornell.edu),  
M.~K.~M{\i}h\c{c}ak are with the Electrical and Electronic Engineering Department of Bo\u{g}azi\c{c}i University,
Istanbul, 34342, Turkey (e-mail: kivanc.mihcak@boun.edu.tr), S.~S.~Kozat is with the Electrical and
Electronic Engineering Department of Ko\c{c} University, Rumeli
Feneri Yolu, Sar{\i}yer, Istanbul, 34450, Turkey (e-mail: skozat@ku.edu.tr)}%
\thanks{Y.~Altu\u{g} is partially supported by T\"{U}B\.{I}TAK Career
Award no. 106E117; M.~K.~M{\i}h\c{c}ak is partially supported by
T\"{U}B\.{I}TAK Career Award no. 106E117 and T\"{U}BA-GEBIP Award.}}
\maketitle

\begin{abstract}
In this paper, we study single transceiver passive RFID networks by modeling the
underlying physical system as a special cascade of a certain broadcast
channel (BCC) and a multiple access channel (MAC), using a ``nested
codebook'' structure in between. The particular application
differentiates this communication setup from an ordinary cascade of a
BCC and a MAC, and requires certain structures such as ``nested
codebooks'', impurity channels or additional power
constraints. We investigate this problem both for discrete alphabets,
where we characterize the achievable rate region, as well as for
continuous alphabets with additive Gaussian noise, where we provide the capacity region. 
Hence, we establish the maximal achievable error free communication rates for
this particular problem which constitutes the fundamental limit that is
achievable by any TDMA based RFID protocol and the achievable rate
region for any RFID protocol for the case of continuous alphabets under additive Gaussian 
noise.
\end{abstract}

\section{Introduction}
\label{sec:intro}

%%%%%%%%%% Short qualitative explanation of the problem at hand%%%%%%%%%%%%%%%%%%%%%%

 In this paper, we deal with a multiuser communication setup which
consists of ``cascade'' of a broadcast channel (BCC) and a multiple
access channel (MAC). The encoder of BCC part and the decoder of the
MAC part is the same transceiver, and the decoders of the BCC part
and the encoders of the MAC part are the mobile units of the system.
The ultimate goal of the communication system considered in the
paper is the following: transceiver\footnote{In practical RFID systems, the problem of reader collusion is also considered, which amounts to having multiple transceivers in our setup. In our case, we concentrate on the ``single reader (transceiver)'' setup as first step.} wants to ``find out'' some
specific information possessed by the mobile units and for this
purpose it first broadcasts the ``type'' of the information it seeks
to receive from each mobile unit. Then every mobile unit
``sends'' the corresponding information of the received type to the
transceiver. The specific type of information phenomenon
differentiates the system at hand from the ordinary cascade of BCC
and MAC, because in order to model this situation we employ a
\emph{nested codebook structure} at the MAC encoders, i.e. at the
mobile units, which will be explained in detail in
Section~\ref{ssec:problem-statement}.

%%%%%%%%%%%%%%%%%Practical Significance%%%%%%%%%%%%%%%%%%%%%%%%%%%%%%%%%%%%%%%%%%%%%%%

Beyond its promising structure to model wireless communication
networks, the problem at hand gives the fundamental limits of RFID
protocols in two different ways, supposing the transceiver is RFID
reader, mobile units are RFID tags and the RFID reader knows the set
of the IDs of the RFID tags in the environment:
\begin{enumerate}
\item[(i)] The above mentioned communication problem gives the
fundamental limits achievable in TDMA based RFID protocols, since
the transceiver sends the TDMA time slots, which are designated to
allow communication in a collusion free manner, using the BCC part
and then mobile units uses their corresponding time slot information
in order to transmit their data to the RFID reader. Supposing equal
information rate, say $R^{ID}$, at each BCC branch, the maximum
number of RFID tags that can be handled is $2^{R^{ID}}$ and the
maximum data rate from tags to reader is the maximum rate that can
be achieved using TDMA at the MAC part of the communication system.

\item[(ii)] The above mentioned communication problem gives the
fundamental limits of any RFID protocol, since the RFID reader
transmits ``on-off'' message\footnote{This on-off message also
meaningful in practice as far as passive RFID tags are concerned,
since they need to facilitate an external energy in order to
operate} from the BCC to tags, and then tags communicate back their
data through the MAC simultaneously to the reader. The achievable
rate region of the MAC part is the fundamental limit of any RFID
protocol under the assumption that receiver knows the set of the IDs
of the RFID tags in the environment.
\end{enumerate}

%%%%%%%%%%%%%%%%%Literature Survey%%%%%%%%%%%%%%%%%%%%%%%%%%%%%%%%%%%%%%%%%%%%%%%%%%%%

The nested codebook structure used in the MAC part of this paper is
similar to the ``pseudo users'' concept introduced in \cite{Med:04},
where the authors investigate a special notion of capacity for time
slotted ALOHA systems by combining multiple access rate splitting and
broadcast codes.  However, in \cite{Med:04}, the authors explicitly
investigate the ALOHA protocol over a degraded additive Gaussian noise
channel, where users communicate over a common channel using data
packets with predefined collusion probability.  Unlike
\cite{Med:04}, our codes achieve the capacity in the usual sense,
where the codewords are sent with arbitrarily small error probability. We
also investigate a cascade structure including a BCC in the front and
a different MAC in the end. We study this setup both for discrete
alphabets using imperfection channels to model the impurities of the
actual physical system as well as for continuous alphabets over additive 
Gaussian noise channel by including appropriate power constraints.

We note that the nested codebook structure used in this paper differs
from the nested codes defined in \cite{Zam:02,Bar:03}. In
\cite{Zam:02} nested codebooks, especially nested lattices codes, are
explicitly defined with a multi-resolution point of view, where the
nesting of codes provide progressively coarser description to finer
description of the intended information. Here, our nested codebooks
are independent from each other and convey different information.

%%%%%%%%%%%%%%%%%Organization of the paper%%%%%%%%%%%%%%%%%%%%%%%%%%%%%%%%%%%%%%%%%%%%
Organization of the paper is as follows: In
Section~\ref{sec:notation-problem-statement} we state the notation
followed throughout the paper and formulate the communication
problem considered in the paper. Section~\ref{sec:discrete-case}
devoted to derive an achievable rate region of the problem for the
case of discrete alphabets, by also including ``imperfection
channels'' in order to model the practical phenomenon better. In
Section~\ref{sec:gaussian-case}, we state the capacity region of the
problem for the case of Gaussian BCC and Gaussian MAC by also
incorporating suitable power constraints. Paper ends with the
conclusions given in Section~\ref{sec:conclusion}.

\section{Notation and Problem Statement}
\label{sec:notation-problem-statement}

\subsection{Notation}
\label{ssec:notation}

Boldface letters denote vectors; regular letters with subscripts
denote individual elements of vectors. Furthermore, capital letters
represent random variables and lowercase letters denote individual
realizations of the corresponding random variable. The sequence of
$\left\{ a_1, a_2, \ldots , a_N \right\}$ is compactly represented
by $\mathbf{a}^N$. The abbreviations ``i.i.d.'', ``p.m.f.'' and
``w.l.o.g.'' are shorthands for the terms ``independent identically
distributed'', ``probability mass function'' and ``without loss of
generality'', respectively.

\subsection{Problem Statement}
\label{ssec:problem-statement}

In this paper, our major concern is finding maximum achievable
error-free rates for the following multiuser communication problem
(For the sake of simplicity, we define the problem for the case of
two mobile units, however all of the results can easily be
generalized to $M$ users using the same arguments employed in the
paper): A transceiver first acts as a transmitter and
\emph{broadcasts} a pair of messages, $(W_1,W_2) \in \cW_1 \times
\cW_2$, to mobile units through the first memoryless communication
channel. Mobile units decode the messages intended to them, i.e.
first (resp. second) mobile unit decides $\hW_1$ (resp. $\hW_2$),
and then choose their messages accordingly, i.e. first (resp.
second) mobile unit chooses $M_1 \in \cM_1^{\hW_1}$ (resp. $M_2 \in
\cM_2^{\hW_2}$), and \emph{simultaneously} sends to transceiver,
which this time acts as a receiver, through the second memoryless
communication channel.

Next, we give the quantitative definition of the communication
system considered:

\begin{definition}
The above-mentioned communication system consists of the following
components:
\begin{enumerate}
\item[(i)] Eight discrete finite sets $\cX$, $\cY_1$, $\cY_2$,
$\cQ_1$, $\cQ_2$, $\hat{\cQ_1}$, $\hat{\cQ_2}$, $\cS$.

\item[(ii)] A one-input two-output, discrete memoryless communication
channel, termed as ``broadcast channel part'' or shortly BCC part
from now on, modeled by a conditional p.m.f. $p(y_1,y_2|x) \in \cY_1
\times \cY_2 \times \cX$. Using the memoryless property, we have the
following expression for the n-th extension of the BCC part:
\begin{equation}
p(\by_1^n, \by_2^n|\bx^n) = \prod_{k=1}^n p(y_{1k},y_{2k}|x_k).
\label{eq:problem-statement-1}
\end{equation}

\item[(iii)] The memoryless ``imperfections channel'', which models
the impurities and the instantaneous erroneous behavior at the
mobile units (especially useful in the modeling of the RFID tags),
given by a conditional p.m.f. $p(\hq_i|q_{i}) \in \hat{\cQ} \times
\cQ_i$. Using the memoryless property, we have the following
expression for the n-th extension of the i-th imperfection channel
\begin{equation}
p( \hbq^n_i|\bq^n_{i} ) = \prod_{k=1}^n p(\hq_{i,k} | q_{i,k}),
\label{eq:problem-statement-2}
\end{equation}
for $i \in \{1,2\}$.

\item[(iv)] A two-input one-output, discrete memoryless communication
channel, termed as ``multiple access channel part'' or shortly MAC
part from now on, given by a conditional p.m.f. $p(s|\hq_1,\hq_2)
\in \cS \times \hat{\cQ_1} \times \hat{\cQ_2}$. Using the memoryless
property, we have the following expression for the n-th extension of
the MAC part:
\begin{equation}
p(\bs^n | \hbq_1^n, \hbq_2^n) = \prod_{k=1}^n
p(s_k|\hq_{1,k},\hq_{2,k}). \label{eq:problem-statement-3}
\end{equation}
\end{enumerate}
\label{def:communication-system}
\end{definition}

Next, we state the code definition
\begin{definition}
An $\left( 2^{nR_1^{ID}}, 2^{nR_2^{ID}}, 2^{nR_1^{Data}},
2^{nR_2^{Data}}, n \right)$ code for the communication system given
above consists of the following parts:

\begin{enumerate}
\item[(i)]Pair of transmitter messages, termed as ``broadcast channel messages''
or shortly BCC messages from now on, to mobile units given as
$(W_1,W_2) \in \cW_1 \times \cW_2$, where $\cW_i \eqdef \left\{ 1,
\ldots, 2^{nR_i^{ID}}\right\}$ for $i \in \{1,2\}$.

\item[(ii)] The transceiver's encoding function, termed as ``broadcast channel encoder'' or shortly BCC encoder from now
on, given as
\begin{equation}
X^{BCC} \; : \; \cW_1 \times \cW_2 \rightarrow \cX^n, \textrm{ such
that } X^{BCC}\left(W_1,W_2\right) = \bx^n(W_1,W_2).
\end{equation}

\item[(iii)] The mobile units' decoding functions, termed as ``broadcast channel decoders'' or shortly BCC decoders from now
on, given by $g_i^{BCC} \; : \; \cY_i^n \rightarrow \cW_i \cup
\{0\}$, such that $g_i^{BCC}(\bY_1^n) = \hW_i$, for $i \in \{1,2\}$,
where $\{0\}$ corresponds to ``miss-type'' error event.

\item[(iv)] The mobile units' messages corresponding to decoded BCC messages $\hW_i$, termed as
``multiple access channel messages'' or shortly MAC messages from
now on, $M_i \in \cM_i^{\hW_i}$, where $\cM_i^{\hW_i} \eqdef \left\{
1, \ldots, 2^{nR_i^{Data}}\right\}$, for $i \in \{1,2\}$. Note that
this is the message part of a ``nested codebook structure''
corresponding to the decoded message $\hW_i$ at each mobile unit.

\item[(v)] The mobile units' encoding function, termed as ``multiple access channel encoders''
or shortly MAC encoders from now on, given by $Q_{i}^{MAC} \; : \;
\cM_i^{\hW_i} \rightarrow \cQ_i^n$, for $i \in \{ 1,2\}$, such that
$Q_i^{MAC}(M_i) = \bq_{\hW_i}^n\left(M_i\right)$. Note that
$\bq_{\hW_i}^n\left(M_i\right)$'s are the codewords of the ``nested
codebook structure'' corresponding to the decoded message $\hW_i$ at
each mobile unit.

\item[(vi)] The transceiver's decoding function, termed as ``multiple access channel decoder'' or shortly MAC decoder from now
on, given by $g^{MAC} \; : \; \cS^n \rightarrow \cM_1^{\cW_1} \times
\cM_2^{\cW_2}$.

\item[(vii)] Decoded messages at the transceiver: $\left(\hM_1, \hM_2 \right) \in \cM_1^{W_1} \times
\cM_2^{W_2}$. Note that since transceiver knows $(W_1,W_2)$ pair and
tries to ``learn'' the corresponding $(M_1,M_2)$ pairs
simultaneously, hence it chooses $(M_1,M_2)$-th messages from the
set $\cM_1^{W_1} \times \cM_2^{W_2}$.
\end{enumerate}
\label{def:code}
\end{definition}

Obviously, the communication system may be intuitively considered as
a cascade of a two user ``broadcast channel''\cite{cov:06} and a two
user ``multiple access channel''\cite{cov:06} with the following
modifications: first the employment of the nested codebook structure
at the MAC encoders and the imperfections channels included. The
aforementioned modified cascade, including the encoders, codewords
and decoders at both BCC and MAC part is shown in
Figure~\ref{fig:discrete} below:

\begin{figure}[!htb]
\centering \epsfxsize 7in
 \epsfbox{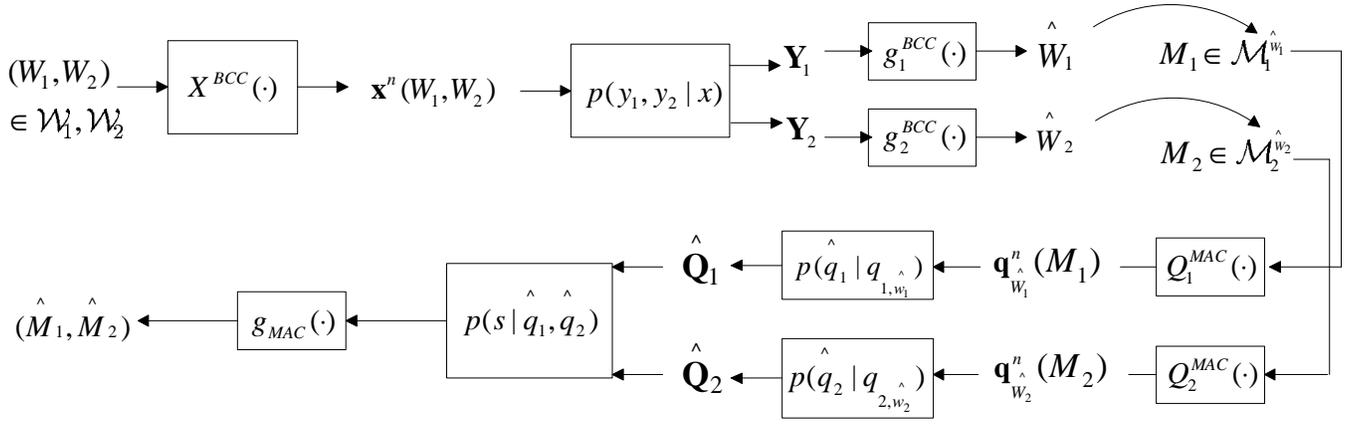} \caption{Block Diagram Representation of the multiuser communication system considered in the paper.} \label{fig:discrete}
\end{figure}

Now, we state following ``probability of error'' related
definitions, which will be used throughout the paper.

\begin{definition}
$ $\\
\vspace{-1cm}
\begin{itemize}
\item[(i)] The \emph{conditional probability of error}, $\lambda_i$,
for the communication system is defined by:
\begin{equation}
\lambda_{w_1,w_2,m_1,m_2}  \eqdef  1- \Pr\left( \left[ (\hW_1,\hW_2)
= (w_1,w_2) | (W_1,W_2) = (w_1,w_2) \right] \wedge \left[
(\hM_1,\hM_2) = (m_1,m_2) | (M_1,M_2) = (m_1,m_2) \right] \right),
\label{eq:problem-statement-4-cond-prob-err}
\end{equation}
and the \emph{maximal probability of error}, $\lambda^{(n)}$, for
the communication system is defined by:
\begin{equation}
\lambda^{(n)}  \eqdef  \max_{w_1,w_2,m_1,m_2}
\lambda_{w_1,w_2,m_1,m_2}.
\label{eq:problem-statement-5-max-prob-err}
\end{equation}

\item[(ii)] The \emph{conditional probability of error for the BCC part},
$\lambda_i^{BCC}$, is defined by:
\begin{equation}
\lambda_{BCC}^{w_1,w_2} \eqdef \Pr\left( (\hW_1,\hW_2) \neq
(w_1,w_2) | (W_1,W_2) = (w_1,w_2) \right),
\label{eq:cond-prob-err-BCC}
\end{equation}
and the \emph{average probability of error for the BCC part},
$P_{e,BCC}^{(n)}$, is defined by:
\begin{equation}
P_{e,BCC}^{(n)} \eqdef \Pr\left( \left( \hW_1, \hW_2 \right) \neq
\left( W_1, W_2\right)\right), \label{eq:average-prob-err-BCC}
\end{equation}

\item[(iii)] The \emph{conditional probability of error for the MAC part},
$\lambda_i^{MAC}$, is defined by:
\begin{equation}
\lambda_{MAC}^{m_1,m_2} \eqdef \Pr\left( (\hM_1,\hM_2) \neq
(m_1,m_2) | (M_1,M_2) = (m_1,m_2), (\hW_1,\hW_2) = (w_1,w_2)\right),
\label{eq:cond-prob-err-MAC}
\end{equation}
and the \emph{average probability of error for the MAC part},
$P_{e,MAC}^{(n)}$ is defined by:
\begin{equation}
P_{e,MAC}^{(n)} \eqdef \Pr\left( \left( \hM_1, \hM_2 \right) \neq
\left( M_1, M_2 \right) | \left( \hW_1, \hW_2 \right) = \left( w_1,
w_2 \right) \right). \label{eq:average-prob-err-MAC}
\end{equation}
\end{itemize}
\label{def:err-prob}
\end{definition}

Note that, using
\eqref{eq:problem-statement-4-cond-prob-err},\eqref{eq:cond-prob-err-BCC}
and \eqref{eq:cond-prob-err-MAC} we conclude that
\begin{equation}
\lambda_{w_1,w_2,m_1,m_2} = 1 - (1 - \lambda_{BCC}^{w_1,w_2})(1 -
\lambda_{MAC}^{m_1,m_2}). \label{eq:cond-prob-err-overall}
\end{equation}

Next, \emph{achievability} is defined as
\begin{definition}
Any rate quadruple $\left( R_1^{ID}, R_2^{ID}, R_1^{Data},
R_2^{Data} \right)$ is said to be \emph{achievable} if there exists
a sequence of codes $\left(2^{nR_1^{ID}}, 2^{nR_2^{ID}},
2^{nR_1^{Data}}, 2^{nR_2^{Data}}, n\right)$ such that $\lambda^{(n)}
\rightarrow 0$ as $n \rightarrow \infty$.
\label{def:achievable-rates}
\end{definition}

\section{Discrete Case}
\label{sec:discrete-case} In this section, we deal with the problem
stated in Section~\ref{ssec:problem-statement} under the discrete
random variables assumption.

\subsection{Achievable Region for The General Case}
\label{ssec:achievable-region-discrete-case} The main result of this
section is the following theorem:
\begin{theorem}
\emph{(Achievability-Discrete Case)} Any quadruple $\left( R_1^{ID},
R_2^{ID}, R_1^{Data}, R_2^{Data}\right) \in \cR_0$ is achievable,
where
\begin{eqnarray}
\cR_0 \eqdef \left\{ ( R_1^{ID}, R_2^{ID}, R_1^{Data}, R_2^{Data} ) \; : \; R_1^{ID}, R_2^{ID}, R_1^{Data}, R_2^{Data} \geq 0, \; R_1^{ID} < I \left( U; Y_1 \right), \; R_2^{ID} < I \left( V; Y_2 \right), \right. \nonumber \\
R_1^{ID}+R_2^{ID} < I \left( U; Y_1 \right) + I \left( V; Y_2
\right) - I \left( U; V \right), \; R_1^{Data} < I ( \hQ_1 ; S |
\hQ_2 ), \; R_2^{Data} < I ( \hQ_2 ; S | \hQ_1 ), \nonumber \\
R_1^{Data}+R_2^{Data} < I (\hQ_1,\hQ_2 ; S), \textrm{ for some }
p(u,v,x) \textrm{ on } \cU
\times \cV \times \cX \textrm{ and } p(q_1,q_2,s) \textrm{ on } \cQ_1 \times \cQ_2 \times \cS,\nonumber\\
\left. \textrm{ where } p(q_1,q_2,s) \eqdef \sum_{\hq_1,\hq_2}
p(s|\hq_1,\hq_2) p(\hq_1|q_1) p(\hq_2|q_2) p(q_1) p(q_2), \textrm{
for some } p(q_1),p(q_2) \textrm{ on } \cQ_1, \cQ_2, \textrm{
respectively}\right\} .\label{eq:discrete-case-1-ach-region}
\end{eqnarray}
\label{thrm:discrete-case-achievable-region}
\end{theorem}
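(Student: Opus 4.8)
The plan is to decompose the problem into its BCC part and its MAC part, establish achievability for each separately using standard random-coding arguments, and then glue the two together via the error decomposition already recorded in \eqref{eq:cond-prob-err-overall}. Concretely, fix any pair of auxiliary distributions $p(u,v,x)$ on $\cU\times\cV\times\cX$ and $p(q_1),p(q_2)$ on $\cQ_1,\cQ_2$, and let $p(q_1,q_2,s)$ be the induced distribution given in \eqref{eq:discrete-case-1-ach-region}. The key observation is that, because the imperfection channels and the MAC channel are memoryless and fixed, the composition $p(s\mid q_1,q_2)\eqdef\sum_{\hq_1,\hq_2}p(s\mid\hq_1,\hq_2)p(\hq_1\mid q_1)p(\hq_2\mid q_2)$ is itself a legitimate two-user discrete memoryless MAC with inputs $q_1,q_2$ and output $s$. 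Hence the MAC part with imperfection channels appended is, from the point of view of the encoders at the mobile units and the decoder at the transceiver, nothing but an ordinary DM-MAC, and its capacity region for \emph{independent} inputs $p(q_1)p(q_2)$ is the familiar pentagon $R_1^{Data}<I(\hQ_1;S\mid\hQ_2)$, $R_2^{Data}<I(\hQ_2;S\mid\hQ_1)$, $R_1^{Data}+R_2^{Data}<I(\hQ_1,\hQ_2;S)$ — exactly the MAC constraints in $\cR_0$ (note the mutual informations are written in terms of $\hQ_1,\hQ_2$ because $S$ depends on $Q_1,Q_2$ only through $\hQ_1,\hQ_2$, so $I(Q_1;S\mid Q_2)=I(\hQ_1;S\mid\hQ_2)$, etc.).

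For the BCC part, the constraints $R_1^{ID}<I(U;Y_1)$, $R_2^{ID}<I(V;Y_2)$, $R_1^{ID}+R_2^{ID}<I(U;Y_1)+I(V;Y_2)-I(U;V)$ are precisely Marton's inner bound for the two-receiver broadcast channel with no common message (here with no ``$W_0$'' term), so a standard Marton random-binning construction produces a BCC encoder $\bx^n(W_1,W_2)$ and decoders $g_i^{BCC}$ with $P_{e,BCC}^{(n)}\to0$, hence also maximal BCC error $\to 0$ after the usual expurgation. The one point deserving care is the ``miss-type'' symbol $\{0\}$: a decoding failure at mobile unit $i$ (output $0$) simply means $\hW_i\neq W_i$, which is already counted as a BCC error, so it creates no additional difficulty.

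Now combine the two layers. Draw the BCC codebook and the $2^{nR_1^{ID}}+2^{nR_2^{ID}}$ nested MAC codebooks independently; the nested structure is harmless because, conditioned on the \emph{correct} decoded types $\hW_i=W_i$ (which the transceiver knows), the transceiver looks only in the codebooks $\cM_1^{W_1}\times\cM_2^{W_2}$, which is a single fixed DM-MAC code. By \eqref{eq:cond-prob-err-overall}, $\lambda_{w_1,w_2,m_1,m_2}=1-(1-\lambda_{BCC}^{w_1,w_2})(1-\lambda_{MAC}^{m_1,m_2})\le \lambda_{BCC}^{w_1,w_2}+\lambda_{MAC}^{m_1,m_2}$, so the overall maximal error is bounded by the sum of the BCC and MAC maximal errors; choosing $n$ large and expurgating makes both vanish. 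I expect the main obstacle to be purely bookkeeping rather than conceptual: one must verify that the independence of the nested codebooks together with the fact that the transceiver knows $(W_1,W_2)$ makes the MAC error analysis over $\cM_1^{W_1}\times\cM_2^{W_2}$ genuinely identical to the classical one (in particular that MAC errors in the ``wrong'' codebooks $\cM_i^{w_i}$, $w_i\neq W_i$, never need to be controlled because those are already BCC error events), and that the Marton region and the MAC pentagon can be achieved \emph{simultaneously} on the same block length $n$ — which is immediate since the two codebook ensembles and the two channels are statistically independent, so one takes $n$ exceeding both threshold requirements. A secondary routine point is passing from average to maximal error on both layers and arguing that the product construction still admits such an expurgation; this follows by discarding the worst half of the BCC message pairs and, for the surviving pairs, the worst half of the MAC message pairs.
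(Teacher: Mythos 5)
Your overall architecture is the same as the paper's: an El Gamal--Van der Meulen/Marton random-binning scheme for the BCC part giving exactly the three ID-rate constraints, independent random codebooks (one per nested index) over the composite channel $p(s\mid q_1,q_2)\eqdef\sum_{\hq_1,\hq_2}p(s\mid\hq_1,\hq_2)p(\hq_1\mid q_1)p(\hq_2\mid q_2)$ for the MAC part, the product decomposition \eqref{eq:cond-prob-err-overall} to combine the two layers, and expurgation to pass from average to maximal error. Your observations that the ``miss-type'' symbol is already absorbed into the BCC error event, and that conditioning on $\hW_i=W_i$ reduces the MAC analysis to a single fixed classical MAC code, match what the paper does implicitly.

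The one step that does not hold is the parenthetical identity $I(Q_1;S\mid Q_2)=I(\hQ_1;S\mid\hQ_2)$. Your random-coding argument codes on $Q_1,Q_2$ over the composite channel, so what it actually delivers is the pentagon with mutual informations evaluated at the true channel inputs, i.e.\ $R_1^{Data}<I(Q_1;S\mid Q_2)$, $R_2^{Data}<I(Q_2;S\mid Q_1)$, $R_1^{Data}+R_2^{Data}<I(Q_1,Q_2;S)$ under $p(q_1)p(q_2)$. These are not equal to the $\hQ$-versions in general: take $\hQ_1=Q_1$, let $\hQ_2$ be a uniform bit independent of $Q_2$, and let $S=\hQ_1\oplus\hQ_2$; then $I(Q_1;S\mid Q_2)=0$ while $I(\hQ_1;S\mid\hQ_2)=1$ bit, and clearly no encoder controlling only $Q_1$ can signal at positive rate through $S=Q_1\oplus\hQ_2$. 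Data processing gives $I(Q_1,Q_2;S)\le I(\hQ_1,\hQ_2;S)$ for the sum rate and nothing useful for the individual terms, because the conditioning variable changes as well. So your argument proves achievability of the $Q$-region, not the $\hQ$-region appearing in \eqref{eq:discrete-case-1-ach-region}. You should be aware that the paper itself has the same mismatch --- its decoder uses $A_\ep^{(n)}(Q_1,Q_2,S)$ and its error bounds require $I(Q_1;S\mid Q_2)-R_1^{Data}>3\ep$ etc., so its proof also only establishes the $Q$-region --- but the correct fix is to state the region in terms of $Q_1,Q_2$ (or justify why the intended model makes the two coincide), not to assert an equality that is false in general.
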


\begin{proof}
Proof follows combining arguments from \cite{gam-meu:81} and
\cite{cov:06} for BCC and MAC parts, respectively; by also taking
imperfection channels and nested codebook structure into account.

W.l.o.g. we suppose $\epsilon \in (0,1)$. \footnote{Since we want to show that $\lambda^{(n)} \rightarrow 0$ as $n \rightarrow \infty$, this will suffice. To see this, observe that in the proof of the theorem, we show that for any sufficiently large $n$ and for any $\epsilon \in (0,1)$, $\lambda^{(n)} \leq \epsilon$, which directly implies  $\lambda^{(n)} \leq \epsilon^\prime$ for any $\epsilon^{\prime} \geq 1$.}

First, define $A_{\ep}^{(n)}(U)$ (resp. $A_{\ep}^{(n)}(V)$) as the
set of $\ep$-typical sequences \cite{cov:06} $\mathbf{u}^n \in
\cU^n$ (resp. $\mathbf{v}^n \in \cV^n$) for any given $p(u)$ (resp.
$p(v)$) on $\cU$ (resp. $\cV$).

Next, for $w_1 \in \{1, \ldots, 2^{nR_1^{ID}} \}$, we define
following cells:
\[
B_{w_1} \eqdef \left[ (w_1 - 1) 2^{n(I(U;Y_1)-R_1^{ID}-\ep)} + 1 ,
\; w_1 2^{n(I(U;Y_1)-R_1^{ID}-\ep)}\right].
\]
Similarly, for resp. $w_2 \in \{1, \ldots, 2^{nR_2^{ID}} \}$, we
define:
\[
 C_{w_2} \eqdef \left[ (w_2 - 1) 2^{n(I(V;Y_2)-R_2^{ID}-\ep)} + 1 ,
\; w_2 2^{n(I(V;Y_2)-R_2^{ID}-\ep)}\right],
\]
w.l.o.g. supposing that
$2^{n(I(U;Y_1)-R_1^{ID}-\ep)},2^{n(I(V;Y_2)-R_2^{ID}-\ep)} \in
\bbZ^+$.

\textbf{Encoding at BCC part:}
\begin{itemize}

\item[i)]\underline{Generation of the codebook}:
Generate the codebook $\mathcal{C}_{BCC} \in \cX^{2^{n R^{ID}_1}}
\times \cX^{2^{n R^{ID}_2}} \times \cX^n$ such that $(i,j,m)$-th
element is $x_m(i,j)$ and $x_m(i,j)$s are i.i.d. realizations of $X$
of which distribution is $p(x) = \sum_{u,v} p(u,v,x)$ for all
$i,j,m$ and reveal the codebook to both mobile units and
transceiver.

\item[ii)] Choose an $(W_1,W_2) \in \cW_1 \times \cW_2$ uniformly over $\cW_1
\times \cW_2$, i.e. $\Pr(W_1 = w_1, W_2 = w_2) = 1/\left(
2^{nR_1^{ID}}2^{nR_2^{ID}} \right)$, for all $(w_1,w_2) \in \cW_1
\times \cW_2$.

\item[iii)] Next, generate $2^{n(I(U;Y_1) - \ep)}$, i.i.d. $\bu^n$, such that
\[
p(\bu^n)=\left\{
           \begin{array}{cl}
             \frac{1}{|A_\ep^{(n)}(U)|} & \hbox{, if } \bu^n \in A_\ep^{(n)}(U) \\
             0 & \hbox{, otherwise}
           \end{array}
         \right.
\]
Similarly, generate $2^{n(I(V;Y_2) - \ep)}$, i.i.d. $\bv^n$, such
that
\[
p(\bv^n)=\left\{
           \begin{array}{cl}
             \frac{1}{|A_\ep^{(n)}(V)|} & \hbox{, if } \bv^n \in A_\ep^{(n)}(V) \\
             0 & \hbox{, otherwise}
           \end{array}
         \right.
\]
Label these $\bu^n(k)$ (resp. $\bv^n(l)$), $k \in \left[ 1,
2^{n(I(U;Y_1) - \ep)}\right]$ (resp. $l \in \left[ 1, 2^{n(I(V;Y_2)
- \ep)}\right]$).

\item[iv)] If a message pair $(w_1,w_2)$ is to be transmitted, pick
one pair $(\bu^n(k), \bv^n(l)) \in A_\ep^{(n)}(U,V) \cap B_{w_1}
\times C_{w_2}$. Then, find an $\bx(w_1,w_2)$ which is jointly
$\ep$-typical with $(w_1,w_2)$ pair and designate it as the
corresponding codeword of $(w_1,w_2)$. Send over the BCC part,
$p(y_1,y_2|x)$.
\end{itemize}

\textbf{Decoding at BCC part:}
\begin{itemize}
\item[i)] Find the indexes $\hat{k}$ (resp. $\hat{l}$) such that $(\bu^n(\hat{k}),\by_1) \in
A_\ep^{(n)}(U,Y_1)$ (resp. $(\bv^n(\hat{l}),\by_2) \in
A_\ep^{(n)}(V,Y_2)$). If $\hat{k}, \hat{l}$ are not unique or does
not exist, declare an error, i.e. $\hW_1 = 0$ and/or $\hW_2 = 0$.
Else, decide $\hW_1 \in \cW_1$ (resp. $\hW_2 \in \cW_2$) at mobile
unit one (resp two), such that $\hat{k} \in B_{\hW_1}$ (resp.
$\hat{l} \in C_{\hW_2}$).
\end{itemize}

\textbf{Encoding at MAC part:}
\begin{itemize}
\item[i)]\underline{Generation of the codebook}(Nested codebook structure):
Fix $p(q_1), p(q_2)$. Let $p(q_1,q_2)=p(q_1)p(q_2)$. Generate the
$w_i$-th codebook $\mathcal{C}_{MAC}^{w_i} \in
\cQ_i^{2^{nR_i^{Data}}} \times \cQ_i^n$ such that $(j,k)$-th element
is $q_{w_i,k}(j)$ and $q_{w_i,k}(j)$s are i.i.d. realizations of
$Q_i$ of which distribution is $p(q_i)$ for all $j \in \{ 1, \ldots,
2^{nR_i^{Data}}\}$, $k \in \{ 1, \ldots, n\}$ and $i \in \{1,2\}$.

\item[ii)] Choose a message $M_i \in \cM_i^{\hW_i}$ uniformly for the $\hW_i$ decided at the BCC part,
i.e. $\Pr(M_i=m_i) = \frac{1}{2^{nR_i^{Data}}}$, for all $m_i \in
\cM_i^{\hW_i}$ and for $i \in \{1,2\}$. In order to send the message
$m_i$, pick the corresponding codeword $\bq_{\hW_i}^{n}(m_i)$ of
$\mathcal{C}_{MAC}^{\hW_i}$ and send over the imperfection channel
$p(\hq_i|q_{\hW_i})$ resulting in $\hbq_i^n$ for $i \in \{1,2\}$.
The pair of $(\hbq_1,\hbq_2)$ is the input to the MAC part,
$p(s|\hq_1,\hq_2)$.
\end{itemize}

\textbf{Decoding at MAC part:}
\begin{itemize}
\item[i)] Find the pair of indexes $\left( \hM_1,\hM_2 \right) \in \cM_1^{w_1} \times
\cM_2^{w_2}$ such that $(\bq_{w_1}^n(\hM_1), \bq_{w_2}^n(\hM_2),
\bs^n) \in A_\ep^{(n)}(Q_1,Q_2,S)$, where $A_\ep^{(n)}(Q_1,Q_2,S)$
is the $\ep$-typical set with respect to distribution
\begin{eqnarray}
p(q_1,q_2,s) & = & \sum_{\hq_1,\hq_2} p(s|\hq_1,\hq_2,q_1,q_2) p(\hq_1,\hq_2|q_1,q_2) p(q_1)p(q_2), \label{eq:average-prob-err-MAC-3} \\
 & = & \sum_{\hq_1,\hq_2} p(s|\hq_1,\hq_2) p(\hq_1,\hq_2|q_1,q_2) p(q_1)p(q_2), \label{eq:average-prob-err-MAC-4}\\
 & = & \sum_{\hq_1,\hq_2} p(s|\hq_1,\hq_2) p(\hq_1|q_1) p(\hq_2|q_2) p(q_1) p(q_2), \label{eq:average-prob-err-MAC-5}
\end{eqnarray}
where \eqref{eq:average-prob-err-MAC-3} follows since
$p(q_1,q_2)=p(q_1)p(q_2)$ (cf. the codebook generation of MAC part),
\eqref{eq:average-prob-err-MAC-4} follows since MAC channel depends
on only $(\hq_1,\hq_2)$ and \eqref{eq:average-prob-err-MAC-5}
follows since imperfection channels are independent and depends on
only $q_1$ and $q_2$, respectively.

If such a $\left( \hM_1, \hM_2 \right)$ pair does not exist or is
not unique, then declare an error, i.e. $\hM_1 = 0$ and/or $\hM_2 =
0$; otherwise decide $\left( \hM_1,\hM_2 \right)$.
\end{itemize}

\textbf{Analysis of Probability of Error:}\\
We begin with BCC part. By defining the error event as
$\mathcal{E}^{BCC} \eqdef \left\{ (\hW_1(\bY_1^n), \hW_2(\bY_2^n))
\neq (W_1,W_2)\right\}$, we have the following expression for the
average probability of error averaged over all messages,
$(w_1,w_2)$, and codebooks, $\cC_{BCC}$
\begin{eqnarray}
P_{e,BCC}^{(n)} & = & \Pr \left( \mathcal{E}^{BCC} \right), \nonumber \\
 & = & \Pr\left( \mathcal{E}^{BCC} | (W_1,W_2) = (1,1) \right),
 \label{eq:average-prob-err-BCC-1}
\end{eqnarray}
where \eqref{eq:average-prob-err-BCC-1} follows by noting the
equality of arithmetic average probability of error and the average
probability of error given in \eqref{eq:average-prob-err-BCC} and
the symmetry of the codebook construction at the BCC part.

Next, we define following type of error events:
\begin{eqnarray}
\mathcal{E}_1^{BCC} & \eqdef & \left\{\nexists (\bu^n(k), \bv^n(l)) \in (B_{1} \times C_{1}) \cap A_\ep^{(n)}(U,V)\right\}, \label{eq:average-prob-err-BCC-E1} \\
\mathcal{E}_2^{BCC} & \eqdef & \left\{ (\bu^n(k), \bv^n(l), \bx^n(w_1,w_2), \by_1^n, \by_2^n) \not \in A_\ep^{(n)}(U,V,X,Y_1,Y_2) \right\}, \label{eq:average-prob-err-BCC-E2}\\
\mathcal{E}_3^{BCC} & \eqdef & \left\{ \exists \hat{k} \neq k, \textrm{ s.t. } (\bu^n(\hat{k}),\by_1^n) \in A_\ep^{(n)}(U,Y_1)\right\}, \label{eq:average-prob-err-BCC-E3} \\
\mathcal{E}_4^{BCC} & \eqdef & \left\{ \exists \hat{l} \neq l,
\textrm{ s.t. } (\bv^n(\hat{l}),\by_2^n) \in A_\ep^{(n)}(V,Y_2)\right\},
\label{eq:average-prob-err-BCC-E4}
\end{eqnarray}
where \eqref{eq:average-prob-err-BCC-E1} corresponds to the failure
of the encoding, \eqref{eq:average-prob-err-BCC-E3} (resp.
\eqref{eq:average-prob-err-BCC-E4}) corresponds to the failure of
the decoding at mobile unit one (resp. mobile unit two).

Using typicality arguments, it can be shown that $\Pr \left(
\mathcal{E}_i^{BCC}\right) \leq \epsilon/4$ for $i \in \{ 2, 3, 4\}$
and Lemma 1 of \cite{gam-meu:81} also guarantees that $\Pr \left(
\mathcal{E}_1^{BCC} \right) \leq \epsilon/4$. Using these facts and
the union bound, we conclude that
\begin{equation}
P_{e,BCC}^{(n)} = \Pr(\mathcal{E}^{BCC}) =
\Pr(\mathcal{E}^{BCC}|(W_1,W_2)=(1,1)) \leq \ep,
\label{eq:average-prob-err-BCC-5}
\end{equation}
for any $\ep > 0$, for sufficiently large $n$; provided that
$I(U;Y_1) > R_1^{ID} + \ep$, $I(V;Y_2) > R_2^{ID} + \ep$, $ I(U;Y_1)
+ I(V;Y_2) - I(U;V) > R_1^{ID} + R_2^{ID} + 2 \ep + \delta(\ep)$,
such that $\delta(\ep) \rightarrow 0$ as $\ep \rightarrow 0$.

Further, using standard arguments for finding a code with negligible
maximal probability of error (cf. \cite{cov:06} pp. 203-204) from
the one with $P_{e,BCC}^{(n)} \leq \ep$ we conclude that we have
\begin{equation}
\lambda_{BCC}^{(n)} \eqdef \max_{w_1,w_2} \lambda_{BCC}^{w_1,w_2}
\leq 2 \ep, \label{eq:max-prob-err-BCC}
\end{equation}
for any $\ep >0$ and for sufficiently large $n$, which concludes the
BCC part.

By defining the error event as $\mathcal{E}^{MAC} \eqdef \left\{
\left( \hM_1(\bS^n),\hM_2(\bS^n) \right) \neq (M_1,M_2) | \left(
\hW_1, \hW_2 \right) = \left( w_1, w_2 \right) \right\}$, we have
the following expression for the average probability of error
averaged over all messages, $(m_1,m_2)$, and codebooks corresponding
to the messages, $\cC_{MAC}^{w_1}$ and $\cC_{MAC}^{w_2}$
\begin{eqnarray}
P_{e,MAC}^{(n)} & = & \Pr \left( \mathcal{E}^{MAC} \right), \nonumber \\
 & = & \Pr\left( \mathcal{E}^{MAC} | (M_1,M_2) = (1,1) \right),
 \label{eq:average-prob-err-MAC-1}
\end{eqnarray}
where \eqref{eq:average-prob-err-MAC-1} follows by noting the
equality of arithmetic average probability of error and the average
probability of error given in \eqref{eq:average-prob-err-MAC} and
the symmetry of the nested codebook construction at the MAC part.

Next, we define the following events
\begin{equation}
\mathcal{E}_{ij}^{MAC} \eqdef \left\{ (\bq_{w_1}^n (i), \bq_{w_2}^n
(j), \bs^n) \in A_\ep^{(n)}(Q_1,Q_2,S) \right\},
\label{eq:average-prob-err-MAC-E}
\end{equation}

Using union bound and appropriately bounding each error event by
exploiting typicality arguments, one can show that
\begin{equation}
P_{e,MAC}^{(n)} = \Pr\left(\mathcal{E}^{MAC}\right) =
\Pr\left(\mathcal{E}^{MAC} | (M_1,M_2) = (1,1) \right)\leq \ep,
\label{eq:average-prob-err-MAC-6}
\end{equation}
for any $\ep>0$ and sufficiently large $n$; provided that
$I(Q_1;S|Q_2) - R_1^{Data}
> 3 \ep$, $I(Q_2;S|Q_1) - R_2^{Data} > 3 \ep$ and $I(Q_1,Q_2;S) -
(R_1^{Data} + R_2^{Data})> 4 \ep$.

Further, using standard arguments for finding a code with negligible
maximal probability of error (cf. \cite{cov:06} pp. 203-204) from
the one with $P_{e,MAC}^{(n)} \leq \ep$ we conclude that we have
\begin{equation}
\lambda_{MAC}^{(n)} \eqdef \max_{m_1,m_2} \lambda_{MAC}^{m_1,m_2}
\leq 2 \ep, \label{eq:max-prob-err-MAC}
\end{equation}
for any $\ep >0$ and for sufficiently large $n$, which concludes the
MAC part.

Next, we sum up things and conclude the proof in the following
manner.

First, by plugging \eqref{eq:cond-prob-err-overall} in
\eqref{eq:problem-statement-5-max-prob-err}, we have
\begin{equation}
\lambda^{(n)} =
\max_{\lambda_{BCC}^{w_1,w_2},\lambda_{MAC}^{m_1,m_2}}
\lambda_{BCC}^{w_1,w_2} + \lambda_{MAC}^{m_1,m_2} -
\lambda_{BCC}^{w_1,w_2}\lambda_{MAC}^{m_1,m_2}.
\label{eq:max-prob-err-overall-1}
\end{equation}
Further, using the fact that the cost function in
\eqref{eq:max-prob-err-overall-1} is monotonic increasing in both
$\lambda_{BCC}^{w_1,w_2}$ and $\lambda_{MAC}^{m_1,m_2}$, we conclude
that (cf. \eqref{eq:max-prob-err-BCC} and
\eqref{eq:max-prob-err-MAC})
\begin{equation}
\lambda^{(n)} \leq 4 \ep - 4 \ep^2,
\label{eq:max-prob-err-overall-2}
\end{equation}
for any $0 < \ep < 1$ and sufficiently large $n$. Since $\ep$ may be
arbitrarily small, \eqref{eq:max-prob-err-overall-2} concludes the
proof.
\end{proof}

\section{Power Constrained Gaussian Case}
\label{sec:gaussian-case}
\subsection{Problem Statement}
\label{ssec:problem-statement-gaussian}

In this section, we generalize the communication problem stated in
Section~\ref{ssec:problem-statement} to continuous random variables
under the assumption of Gaussian noise and power constraint on the
codebooks. To be more precise we have the problem depicted in
Figure~\ref{fig:gaussian}, with the power constraints:
\begin{eqnarray}
\textrm{E}\left[ X^2 \right] & \leq & P, \label{eq:gaussian-power-const-1}\\
\textrm{E}\left[ (Q_{1,\hW_1})^2 \right] & \leq & \alpha_1 P_1, \label{eq:gaussian-power-const-2}\\
\textrm{E}\left[ (Q_{1,\hW_2})^2 \right] & \leq & \alpha_2 P_2,
\label{eq:gaussian-power-const-3}
\end{eqnarray}
such that $\alpha_1,\alpha_2 < 1$ and $P_1+P_2 \leq P$, where $P_1$
(resp. $P_2$) is the power delivered to mobile unit one (resp. two)
and w.l.o.g. we assume that $N_1 < N_2$.
\begin{figure}[!htb]
\centering \epsfxsize 6in
 \epsfbox{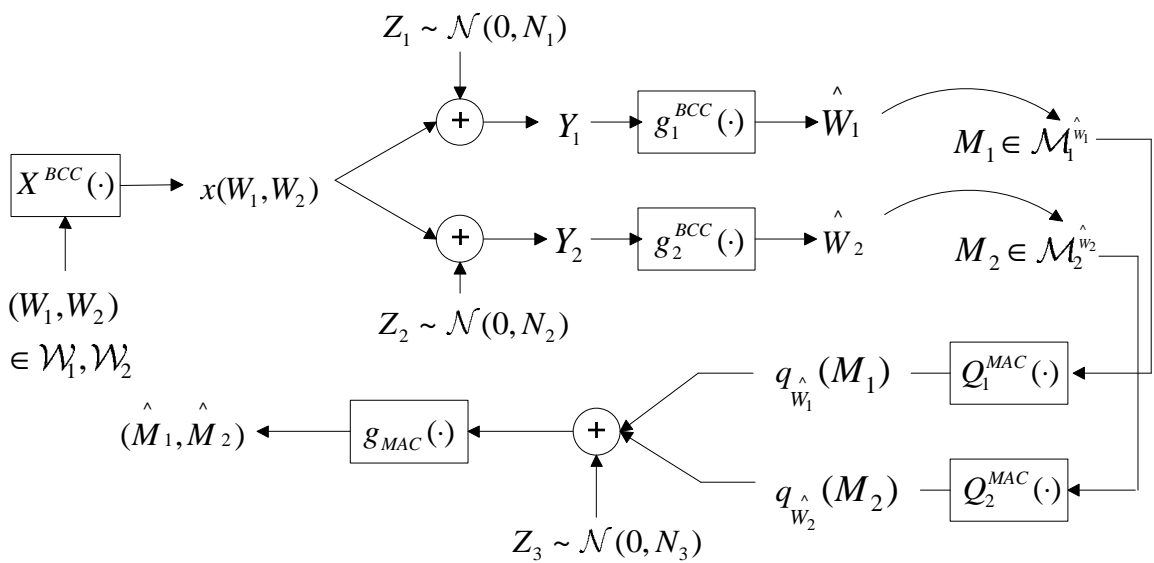} \caption{Block Diagram Representation of the multiuser communication system under Gaussian noise assumption.} \label{fig:gaussian}
\end{figure}

Note that both Definition~\ref{def:communication-system} (excluding
imperfection channels, which are irrelevant for this case) and
Definition~\ref{def:code} are valid for this case, with $\cX = \cQ_1
= \cQ_2 = \cS = \bbR$.

\begin{remark}
$ $
\begin{itemize}
\item[(i)] Observe that, we model the ``imperfection channel'' of
discrete case as an additional power constraint for the Gaussian
case.

\item[(ii)] BCC part for the Gaussian case at hand is equivalent to ``degraded BCC'',
which enables us to state the \emph{capacity region} instead of
characterizing achievable region only.
\end{itemize}
\label{rem:gaussian}
\end{remark}

\subsection{Capacity Region for Gaussian Case}
\label{ssec:gaussian-capacity-region}

In this section, we state the capacity region of the communication
system given in Section~\ref{ssec:problem-statement-gaussian}. Note
that throughout the section, all the logarithms are base $e$, in
other words the unit of information is ``nats''.

\begin{theorem}
The capacity region, $\cR_1 \subset \bbR^4$, of the system shown in
Figure~\ref{fig:gaussian} is given by
\begin{eqnarray}
\cR_1 \eqdef \left\{ (R_1^{ID}, R_2^{ID}, R_1^{Data}, R_2^{Data}) \;
: \; R_1^{ID}, R_2^{ID}, R_1^{Data}, R_2^{Data} \geq 0, \; R_1^{ID} < \frac{1}{2}\log\left(1+\frac{\alpha P}{N_1}\right),\right. \nonumber\\
R_2^{ID} < \frac{1}{2} \log\left( 1 + \frac{(1-\alpha)P}{N_2+\alpha P}\right), \; R_1^{Data} < \frac{1}{2} \log\left( 1 + \frac{\alpha \alpha_1 P}{N_3}\right), \; R_2^{Data} < \frac{1}{2}\log\left( 1+\frac{(1-\alpha)\alpha_2 P}{N_3}\right),\nonumber\\
\left. R_1^{Data}+R_2^{Data} < \frac{1}{2} \log\left( 1+\frac{\alpha
\alpha_1 P + (1-\alpha)\alpha_2 P} {N_3}\right), \textrm{ s. t. } 0
\leq \alpha \leq 1, \; 0 \leq \alpha_1,\alpha_2 \leq 1 \right\},
\label{eq:gaussian-capacity-region}
\end{eqnarray}
where $\alpha$ may be chosen arbitrarily in the given range and
$\alpha_1$ and $\alpha_2$ are system parameters.
\label{thrm:capacity-region-gaussian}
\end{theorem}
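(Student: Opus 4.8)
The plan is to prove $\cR_1$ by establishing achievability and a matching converse, treating the BCC half and the MAC half separately and gluing them through the error identity \eqref{eq:cond-prob-err-overall}. For \textbf{achievability} I would specialize the construction in the proof of Theorem~\ref{thrm:discrete-case-achievable-region} to Gaussian statistics. On the BCC side, pick $\alpha\in[0,1]$ and use superposition coding for the physically degraded Gaussian broadcast channel (degradedness holds since $N_1<N_2$, so $Y_2$ is $Y_1$ plus an independent Gaussian of variance $N_2-N_1$): put power $\alpha P$ on the satellite codeword carrying $W_1$ and power $(1-\alpha)P$ on the cloud centre carrying $W_2$; the strong unit $1$ decodes by successive cancellation and the weak unit $2$ decodes the cloud centre treating the satellite as noise, giving $R_1^{ID}<\tfrac12\log(1+\alpha P/N_1)$ and $R_2^{ID}<\tfrac12\log(1+(1-\alpha)P/(N_2+\alpha P))$. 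On the MAC side, keep the nested-codebook structure of Definition~\ref{def:code}, drawing the codewords of $\cC_{MAC}^{\hW_i}$ i.i.d.\ Gaussian with power $\alpha_i P_i=\alpha_1\alpha P$ (resp.\ $\alpha_2(1-\alpha)P$), and decode by joint typicality over the Gaussian MAC; since $(\hW_1,\hW_2)=(W_1,W_2)$ outside an event of vanishing probability, the receiver almost always searches in the correct pair of codebooks, and the standard Gaussian-MAC analysis yields the three $R^{Data}$ inequalities. Concatenating, \eqref{eq:cond-prob-err-overall} forces $\lambda^{(n)}\to 0$, so every quadruple in the interior of $\cR_1$ is achievable.

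For the \textbf{converse}, suppose a quadruple is achievable, fix a code in the sequence with $\lambda^{(n)}$ small, so $P_{e,BCC}^{(n)}$ and $P_{e,MAC}^{(n)}$ are both small. For the $R^{ID}$ bounds I would invoke the classical Bergmans-type converse for the degraded Gaussian broadcast channel \cite{cov:06}: combining Fano's inequality with the entropy power inequality, one extracts a parameter $\alpha\in[0,1]$ --- morally the fraction of the transmit power not spent resolving $W_2$ --- for which $R_1^{ID}\le\tfrac12\log(1+\alpha P/N_1)+\varepsilon_n$ and $R_2^{ID}\le\tfrac12\log(1+(1-\alpha)P/(N_2+\alpha P))+\varepsilon_n$ with $\varepsilon_n\to0$. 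This $\alpha$ simultaneously fixes the delivered powers $P_1=\alpha P$, $P_2=(1-\alpha)P$: by Remark~\ref{rem:gaussian}(i) and the constraints \eqref{eq:gaussian-power-const-1}--\eqref{eq:gaussian-power-const-3}, the harvested power, hence the transmit budget of unit $i$, is $\alpha_i$ times the delivered power. For the $R^{Data}$ bounds, condition on the high-probability event $\{(\hW_1,\hW_2)=(W_1,W_2)\}$; on it the system collapses to an ordinary two-user Gaussian MAC with input powers bounded by $\alpha_1\alpha P$ and $\alpha_2(1-\alpha)P$ and noise $N_3$, and the textbook Gaussian-MAC converse (Fano, $h(S^n)\le\tfrac n2\log 2\pi e(\,\text{power}+N_3)$, plus conditioning on one user's message for the individual bounds) gives the two individual $R^{Data}$ bounds and the sum bound up to $o(1)$ terms. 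Letting $n\to\infty$ removes the $\varepsilon_n$ and places the quadruple in $\cR_1$.

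The step I expect to be the main obstacle is the \emph{coupling} between the two halves: making the single $\alpha$ that emerges from the broadcast converse be exactly the parameter governing the MAC power budgets. One must argue that the ``power delivered to unit $i$'' is, without loss of generality, the superposition-coding share $\alpha P$ / $(1-\alpha)P$ identified by the EPI argument, i.e.\ a code cannot gain by decoupling the BCC power allocation from the energy it makes available to the tags --- this is precisely what modeling the impurities as the extra constraints \eqref{eq:gaussian-power-const-2}--\eqref{eq:gaussian-power-const-3} buys us. Secondary technical points, all routine: the passage from average to maximal probability of error (as in \cite{cov:06}, pp.~203--204, already used twice in the discrete proof), the passage from the open region with strict inequalities to its closure via expurgation and continuity/time-sharing, and a check that the decoder of Definition~\ref{def:code}(vi), which searches only within $\cM_1^{W_1}\times\cM_2^{W_2}$, does not enlarge the attainable MAC region beyond the single-pair Gaussian-MAC pentagon --- it does not, since conditioned on correct BCC decoding the remaining nested codebooks are irrelevant.
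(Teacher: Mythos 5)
Your proposal follows essentially the same route as the paper: superposition coding on the degraded Gaussian BCC plus nested i.i.d.\ Gaussian codebooks with joint-typicality decoding on the MAC for achievability, and Bergmans' converse for the degraded Gaussian broadcast channel combined with the standard Gaussian-MAC converse, glued together through the factorization of the overall error probability into BCC and MAC parts. If anything, you are more explicit than the paper about the one delicate point --- tying the single $\alpha$ extracted by the broadcast converse to the MAC power budgets $\alpha_1\alpha P$ and $\alpha_2(1-\alpha)P$ via the delivered-power constraints \eqref{eq:gaussian-power-const-2}--\eqref{eq:gaussian-power-const-3} --- which the paper's converse passes over without comment.
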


\subsubsection{Achievability} \label{ssec:gaussian-achievability}

In section, we prove the forward part of
Theorem~\ref{thrm:capacity-region-gaussian}, in other words
following theorem:

\begin{theorem}
Any rate quadruple $(R_1^{ID}, R_2^{ID}, R_1^{Data}, R_2^{Data}) \in
\bbR^4$, there exists a sequence of \\
$\left(2^{nR_1^{ID}},
2^{nR_2^{ID}}, 2^{nR_1^{Data}}, 2^{nR_2^{Data}}, n\right)$ codes
with arbitrarily small probability of error for sufficiently large
$n$, provided that
\begin{eqnarray}
\frac{1}{2} \log \left( 1 + \frac{\alpha P}{N_1}\right) & > &  R_1^{ID} + \ep, \label{eq:gaussian-achievability-1}\\
\frac{1}{2} \log \left( 1 + \frac{(1 - \alpha)P}{\alpha P + N_2}\right) & > &  R_2^{ID} + \ep, \label{eq:gaussian-achievability-2}\\
\frac{1}{2} \log \left( 1 + \frac{\alpha_1 \alpha P }{N_3} \right) & > &  R_1^{Data} + 3 \ep , \label{eq:gaussian-achievability-3}\\
\frac{1}{2} \log \left( 1 + \frac{\alpha_2 (1- \alpha) P }{N_3}\right) & > &  R_2^{Data} + 3 \ep, \label{eq:gaussian-achievability-4}\\
\frac{1}{2} \log \left( 1 + \frac{\alpha_1 \alpha P + \alpha_2 (1-
\alpha) P}{N_3}\right)& > &  R_1^{Data} + R_2^{Data} + 4 \ep ,
\label{eq:gaussian-achievability-5}
\end{eqnarray}
for any $\ep >0$, $0 \leq \alpha \leq 1$ and $0 \leq \alpha_1,
\alpha_2 \leq 1$. \label{thrm:gaussian-achievability}
\end{theorem}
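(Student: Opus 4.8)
The plan is to mirror the discrete achievability argument of Theorem~\ref{thrm:discrete-case-achievable-region}, replacing typicality-based random coding over finite alphabets by Gaussian random coding with power constraints, and exploiting the degradedness noted in Remark~\ref{rem:gaussian}(ii) to treat the BCC part by superposition coding rather than the general Marton-type construction. The parameter $\alpha$ plays the role of the broadcast power split fraction: a fraction $\alpha P$ is allocated to mobile unit~$1$ and $(1-\alpha)P$ to mobile unit~$2$, which in the passive-RFID model also fixes the harvested powers $P_1 = \alpha P$ and $P_2 = (1-\alpha)P$ used by the tags in the MAC phase.

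For the BCC part, since $N_1 < N_2$, I would generate a superposition codebook: draw $2^{nR_1^{ID}}$ refinement codewords $\mathbf{u}^n(w_1)$ i.i.d. $\mathcal{N}(0,\alpha P)$ componentwise and, independently, $2^{nR_2^{ID}}$ cloud centers $\mathbf{v}^n(w_2)$ i.i.d. $\mathcal{N}(0,(1-\alpha)P)$, and transmit $\mathbf{x}^n = \mathbf{u}^n(W_1) + \mathbf{v}^n(W_2)$, which meets \eqref{eq:gaussian-power-const-1} in expectation (with the usual truncation/expurgation enforcing it per codeword). Mobile unit~$2$ decodes $\hat W_2$ by joint typicality of $(\mathbf{v}^n(\hat W_2),\mathbf{y}_2^n)$ treating $\mathbf{u}^n$ as Gaussian noise of variance $\alpha P$, with vanishing error probability under \eqref{eq:gaussian-achievability-2}; mobile unit~$1$ first decodes $\hat W_2$ — which succeeds since $N_1 < N_2$ makes $\tfrac12\log(1+(1-\alpha)P/(\alpha P + N_1))$ strictly exceed the right side of \eqref{eq:gaussian-achievability-2} — then subtracts $\mathbf{v}^n(\hat W_2)$ and decodes $\hat W_1$ from $\mathbf{u}^n(W_1)+\mathbf{z}_1^n$, with vanishing error probability under \eqref{eq:gaussian-achievability-1}. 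A union bound gives $P_{e,BCC}^{(n)} \le \epsilon$ for large $n$, and the expurgation argument of \cite{cov:06}, pp.~203--204, upgrades this to $\lambda_{BCC}^{(n)} \le 2\epsilon$ exactly as in \eqref{eq:max-prob-err-BCC}.

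For the MAC part, I would retain the nested codebook structure: for each $w_i$, generate $\mathcal{C}_{MAC}^{w_i}$ consisting of $2^{nR_i^{Data}}$ codewords drawn i.i.d. $\mathcal{N}(0,\alpha_1\alpha P)$ for $i=1$ and $\mathcal{N}(0,\alpha_2(1-\alpha)P)$ for $i=2$, so that \eqref{eq:gaussian-power-const-2}--\eqref{eq:gaussian-power-const-3} hold. Conditioned on $(\hat W_1,\hat W_2) = (w_1,w_2)$, the transceiver decodes $(\hat M_1,\hat M_2)$ by joint typicality against $\mathcal{C}_{MAC}^{w_1}\times\mathcal{C}_{MAC}^{w_2}$ over the Gaussian MAC $S = \hat Q_1 + \hat Q_2 + Z_3$ with $Z_3 \sim \mathcal{N}(0,N_3)$; defining the events $\mathcal{E}_{ij}^{MAC}$ as in \eqref{eq:average-prob-err-MAC-E} and applying the union bound, $P_{e,MAC}^{(n)} \le \epsilon$ for large $n$ precisely when \eqref{eq:gaussian-achievability-3}--\eqref{eq:gaussian-achievability-5} hold, and expurgation again gives $\lambda_{MAC}^{(n)} \le 2\epsilon$ as in \eqref{eq:max-prob-err-MAC}.

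Finally, I would combine the two parts exactly as in the discrete proof: substituting \eqref{eq:cond-prob-err-overall} into the definition of $\lambda^{(n)}$ and using monotonicity of $a + b - ab$ in $a,b\in[0,1]$ with $\lambda_{BCC}^{(n)},\lambda_{MAC}^{(n)} \le 2\epsilon$ yields $\lambda^{(n)} \le 4\epsilon - 4\epsilon^2$ for $0<\epsilon<1$, and letting $\epsilon\downarrow 0$ concludes the proof. I expect the main technical obstacle to be the rigorous handling of the \emph{per-codeword} power constraints in the continuous setting — guaranteeing that the random Gaussian codewords obey \eqref{eq:gaussian-power-const-1}--\eqref{eq:gaussian-power-const-3}, not merely in expectation, without rate loss — which is dispatched by the standard truncation/expurgation device for Gaussian channels; a secondary point to verify is that a decoded $\hat W_i \ne W_i$ (so that tag~$i$ uses $\mathcal{C}_{MAC}^{\hat W_i}$ while the transceiver decodes against $\mathcal{C}_{MAC}^{W_i}$) causes no difficulty, but this is already absorbed by conditioning on $(\hat W_1,\hat W_2)$ and the multiplicative decomposition \eqref{eq:cond-prob-err-overall}.
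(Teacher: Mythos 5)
Your proposal is correct and follows essentially the same route as the paper's proof: superposition coding with power split $\alpha P$ / $(1-\alpha)P$ at the BCC part, successive decoding at mobile unit one (justified by $N_1<N_2$), nested Gaussian random codebooks at the MAC part with joint-typicality decoding, expurgation to pass from average to maximal error, and the combination via \eqref{eq:cond-prob-err-overall}. The only cosmetic difference is that the paper handles the per-codeword power constraints by drawing codewords with variance backed off by $\ep/2$ (resp.\ $\ep$) and treating a power violation as an additional error event controlled by the law of large numbers, whereas you invoke the equivalent truncation/expurgation device; the resulting constants ($14\ep$, $12\ep$ versus your $2\ep$) differ but are immaterial.
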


\begin{proof}
In order to prove the theorem, we use \emph{superposition coding}
\cite{cov:06} at BCC part and standard random coding at MAC part. W.l.o.g. suppose $\epsilon \in (0, 13/84)$. \footnote{Since we want to show that $\lambda^{(n)} \rightarrow 0$ as $n \rightarrow \infty$, this will suffice. To see this, observe that in the proof of the theorem, we show that for any sufficiently large $n$ and for any $\epsilon \in (0,13/84)$, $\lambda^{(n)} \leq \epsilon$, which directly implies  $\lambda^{(n)} \leq \epsilon^\prime$ for any $\epsilon^{\prime} \geq 13/84$.}

\textbf{Encoding at BCC part:}
\begin{itemize}
\item[i)]\underline{Generation of the codebook:}(Superposition Coding) Generate codebook, $\cC_{BCC}^1$ (resp. $\cC_{BCC}^2$) with corresponding rate
$R_1^{ID}$ (resp. $R_2^{ID}$) such that both $R_1^{ID}$ and
$R_2^{ID}$ satisfy the conditions
\eqref{eq:gaussian-achievability-1},
\eqref{eq:gaussian-achievability-2} and
\eqref{eq:gaussian-achievability-3} where
\begin{equation}
\cC_{BCC}^1 \eqdef \left[ x_{1,i}(w_1)\right],
\label{eq:gaussian-BCC-codebook-1}
\end{equation}
such that each $x_{1,i}(w_1)$ are i.i.d. realizations of $X_1 \sim
\cN (0, \alpha P - \ep/2)$ and
\begin{equation}
\cC_{BCC}^2 \eqdef \left[ x_{2,i}(w_2)\right],
\label{eq:gaussian-BCC-codebook-2}
\end{equation}
such that each $x_{2,i}(w_2)$ are i.i.d. realizations of $X_2 \sim
\cN (0, (1-\alpha)P -\ep/2)$. Reveal both $\cC_{BCC}^1$ and
$\cC_{BCC}^2$ to each mobile unit.

\item[ii)] Choose a message pair $(w_1,w_2) \in \cW_1 \times \cW_2$,
uniformly over $\cW_1 \times \cW_2$, i.e. $\Pr(W_1=w_1,W_2=w_2) =
1/2^{n(R_1^{ID}+R_2^{ID})}$, for all $(w_1,w_2) \in \cW_1 \times
\cW_2$.

\item[iii)] In order to send message $(w_1,w_2)$, take
$\bx_1^n(w_1)$ from $\cC_{BCC}^1$ and $\bx_2^n(w_2)$ from
$\cC_{BCC}^2$ and send $\bx^n(w_1,w_2) \eqdef \bx_1^n (w_1) +
\bx_2^n(w_2)$ over the BCC to both sides, yielding $Y_1 \eqdef
\bx^n(w_1,w_2) + Z_1$ at mobile unit one and $Y_2 \eqdef
\bx^n(w_1,w_2) + Z_2$ at mobile unit two, where $Z_1$ and $Z_2$ are
arbitrarily correlated with following marginal distributions $Z_1
\sim \cN(0,N_1)$, $Z_2 \sim \cN(0,N_2)$. Note that law of large
numbers ensures $\bx^n(w_1,w_2)$ satisfies the power constraint of
\eqref{eq:gaussian-power-const-1}.
\end{itemize}

\textbf{Decoding at BCC part:}
\begin{itemize}
\item[i)] Upon receiving $\by_2^n$, second mobile unit performs
jointly typical decoding, i.e. decides the unique $\hW_2 \in \cW_2$
such that $\left( \by_2^n, \bx_2^n(\hW_2)\right) \in
A_\ep^{(n)}(X_2,Y_2)$. If such a $\hW_2 \in \cW_2$ does not exist or
is not unique, then declares an error, i.e. $\cW_2 = 0$.

Mobile unit one also performs the same jointly typical decoding
first with $\by_1^n$ in order to decide the unique $\hW_2 \in \cW_2$
such that $\left( \by_1^n, \bx_1^n(\hW_2)\right) \in
A_\ep^{(n)}(X_2,Y)$. If such $\hW_2 \in \cW_2$ does not exist or is
not unique, then declares an error, i.e. $\cW_2 = 0$. After deciding
on $\hW_2$, mobile unit one calculates the corresponding $\by^n
\eqdef \by_1^n - \bx_2^n(\hW_2)$ and then performs jointly typical
decoding, i.e. decides the unique $\hW_1 \in \cW_1$ such that
$\left( \by^n, \bx_1^n(\hW_1)\right) \in A_\ep^{(n)}(X_1,Y)$. If
such a $\hW_1 \in \cW_1$ does not exist or is not unique, then
declares an error, i.e. $\hW_1 = 0$.
\end{itemize}

\textbf{Encoding at MAC part:}
\begin{itemize}
\item[i)]\underline{Generation of Codebook} (Nested Codebook Structure):
Fix $f(q_1), f(q_2)$. Let $f(q_1,q_2)=f(q_1)f(q_2)$. Generate the
$w_1$-th (resp. $w_2$-th) codebook as $\mathcal{C}_{MAC}^{w_1}
\eqdef \left[q_{w_1,j}(m_1) \right]$ (resp. $\mathcal{C}_{MAC}^{w_2}
\eqdef \left[q_{w_2,j}(m_2) \right]$), such that $q_{w_1,j}(m_1)$
(resp. $q_{w_2,j}(m_2)$) are i.i.d. realizations of $Q_1 \sim
\cN(0,\alpha_1 \alpha P - \ep)$ (resp. $Q_2 \sim \cN(0, \alpha_2
(1-\alpha) P - \ep)$) for all $w_1 \in \{ 1, \ldots,
2^{nR_1^{ID}}\}$ (resp. $w_2 \in \{ 1, \ldots, 2^{nR_2^{ID}}\}$),
$m_1 \in \{ 1, \ldots, 2^{nR_1^{Data}}\}$ (resp. $m_2 \in \{ 1,
\ldots, 2^{nR_2^{Data}}\}$) and $j \in \{ 1, \ldots, n\}$.

\item[ii)] Choose a message $M_i \in \cM_i^{\hW_i}$ uniformly, i.e.
$\Pr(M_i = m_i) = 1/2^{nR_i^{Data}}$, for all $m_i \in
\cM_i^{\hW_i}$ and for $i \in \{1,2\}$. In order to send a message
$m_i$, take the corresponding codeword $\bq_{\hW_i}^n$ of
$\cC_{MAC}^{\hW_i}$ and send over the MAC, for $i \in \{1,2\}$,
resulting in $\bS^n \eqdef \bq_{\hW_1}^n + \bq_{\hW_2}^n + \bZ_3^n$.
\end{itemize}

\textbf{Decoding at MAC part:}
\begin{itemize}
\item[i)] Find the pair of indexes $(\hM_1,\hM_2) \in \cM_1^{w_1} \times
\cM_2^{w_2}$ such that $(\bq_{w_1}(\hM_1), \bq_{w_2}(\hM_2), \bs^n)
\in A_\ep^{(n)}(Q_1,Q_2,S)$. If such a pair does not exist or is not
unique, then declare an error, i.e. $\hM_1 = 0$ and/or $\hM_2 = 0$;
otherwise decide $(\hM_1,\hM_2)$.
\end{itemize}

\textbf{Analysis of Probability of Error:} We begin with the BCC
part. First, note that \eqref{eq:average-prob-err-BCC-1} is still
valid as well as the error event definition. Next, we define
following type of error events
\begin{eqnarray}
\mathcal{E}_0^{BCC} & \eqdef & \left\{ \frac{1}{n}\sum_{j=1}^n x_j^2(1,1) > P\right\}, \label{eq:gaussian-BCC-err-event-0} \\
\mathcal{E}_{1,i}^{BCC} & \eqdef & \left\{ (\bx_2^n(i),\by_1^n) \in A_\ep^{(n)}(X_2,Y_1), \textrm { s.t. } i \neq 1 \right\}, \label{eq:gaussian-BCC-err-event-1}\\
\mathcal{E}_{2,j}^{BCC} & \eqdef & \left\{ (\bx_1^n(j),\by^n) \in A_\ep^{(n)}(X_1,Y), \textrm { s.t. } j \neq 1 \right\}, \label{eq:gaussian-BCC-err-event-2}\\
\mathcal{E}_{3,k}^{BCC} & \eqdef & \left\{ (\bx_2^n(k),\by_2^n) \in
A_\ep^{(n)}(X_2,Y_2), \textrm { s.t. } k \neq 1 \right\},
\label{eq:gaussian-BCC-err-event-3}
\end{eqnarray}
where \eqref{eq:gaussian-BCC-err-event-0} corresponds to the
violation of the power constraint,
\eqref{eq:gaussian-BCC-err-event-1} corresponds to the failure of
the first step of the decoding at the mobile unit one,
\eqref{eq:gaussian-BCC-err-event-2} corresponds to the failure of
the second step of the decoding at the mobile unit one,
\eqref{eq:gaussian-BCC-err-event-3} corresponds to the failure of
the decoding at the mobile unit two.

Using union bound and appropriately bounding the probability of each
error event term by using arguments of typicality (except for the
power constraint, which follows from law of large numbers), one can
show that
\begin{equation}
P_{e,BCC}^{(n)} = \Pr \left( \mathcal{E}^{BCC} \right) = \Pr \left(
\mathcal{E}^{BCC} | (W_1,W_2) = (1,1) \right) \leq 7 \ep,
\label{eq:gaussian-BCC-average-prob-err-31}
\end{equation}
for any $\ep >0$ and sufficiently large $n$, provided that
$\frac{1}{2}\log\left( 1 + \frac{\alpha P}{N_1}\right) - R_1^{ID}
> \ep$ (cf. \eqref{eq:gaussian-achievability-1}), $\frac{1}{2}\log\left( 1 + \frac{(1 - \alpha)P}{\alpha P +
N_2}\right) - R_2^{ID} > \ep$ (cf.
\eqref{eq:gaussian-achievability-2}) and $\frac{1}{2} \log \left( 1
+ \frac{(1 - \alpha)P}{\alpha + N_1} \right) - R_2^{ID} > \ep$ (
which is guaranteed by recalling $N_1<N_2$ and
\eqref{eq:gaussian-achievability-1}.

Further, using standard arguments for finding a code with negligible
maximal probability of error (cf. \cite{cov:06} pp. 203-204) from
the one with $P_{e,BCC}^{(n)} \leq 7 \ep$ we conclude that we have
\begin{equation}
\lambda_{BCC}^{(n)} \eqdef \max_{w_1,w_2} \lambda_{BCC}^{w_1,w_2}
\leq 14 \ep, \label{eq:gaussian-max-prob-err-BCC}
\end{equation}
for any $\ep >0$ and sufficiently large $n$, provided that
\eqref{eq:gaussian-achievability-1} and
\eqref{eq:gaussian-achievability-2} hold, which concludes the BCC
part.

Now, we continue with the MAC part and note that
\eqref{eq:average-prob-err-MAC-1} is still valid as well as the
error event definition. We additionally include the following type
of error event, which deals with the power constraints
\begin{equation}
\mathcal{E}^{MAC}_{0,i} \eqdef \left\{ \frac{1}{n}\sum_{j=1}^n
q_{w_i,j}^2(1) > \alpha_i P_i \right \},
\label{eq:gaussian-MAC-err-event-0}
\end{equation}
for $i \in \{ 1, 2\}$, such that $P_1 = \alpha P$ and $P_2 = (1 -
\alpha)P$ and $\alpha$ is the same as the one given in BCC case.

Using union bound and appropriately bounding the probability of each
error event term by using arguments of typicality (except for the
power constraint related terms, which follow from law of large
numbers), one can show that
\begin{equation}
P_{e,MAC}^{(n)} = \Pr \left( \mathcal{E}^{MAC} \right) = \Pr \left(
\mathcal{E}^{MAC} | (M_1,M_2) = (1,1) \right) \leq 6 \ep,
\label{eq:gaussian-BCC-average-prob-err-3}
\end{equation}
for any $\ep >0$ and sufficiently large $n$, provided that
$\frac{1}{2} \log \left( 1 + \frac{\alpha_1 \alpha P }{N_3} \right)
>  R_1^{Data} + 3 \ep $,
$\frac{1}{2} \log \left( 1 + \frac{\alpha_2 (1- \alpha) P
}{N_3}\right) >  R_2^{Data} + 3 \ep$, $\frac{1}{2} \log \left( 1 +
\frac{\alpha_1 \alpha P + \alpha_2 (1- \alpha) P}{N_3}\right) >
R_1^{Data} + R_2^{Data} + 4 \ep$.

Further, using standard arguments for finding a code with negligible
maximal probability of error (cf. \cite{cov:06} pp. 203-204) from
the one with $P_{e,MAC}^{(n)} \leq 6 \ep$ we conclude that we have
\begin{equation}
\lambda_{MAC}^{(n)} \eqdef \max_{m_1,m_2} \lambda_{MAC}^{m_1,m_2}
\leq 12 \ep, \label{eq:gaussian-max-prob-err-MAC}
\end{equation}
for any $\ep >0$ and sufficiently large $n$, provided that
\eqref{eq:gaussian-achievability-3},
\eqref{eq:gaussian-achievability-4} and
\eqref{eq:gaussian-achievability-5} hold, which concludes the MAC
part.

Following similar arguments as in
Section~\ref{ssec:achievable-region-discrete-case} and using
\eqref{eq:gaussian-max-prob-err-BCC} and
\eqref{eq:gaussian-max-prob-err-MAC}, we conclude that
\begin{equation}
\lambda^{(n)} \leq \ep (26 - 168 \ep),
\label{eq:gaussian-max-prob-err-overall}
\end{equation}
for any $ 0 < \ep <\frac{13}{84}$, where $\lambda^{(n)}$ is as
defined in \eqref{eq:problem-statement-5-max-prob-err}. Since $\ep$
may be arbitrarily small, \eqref{eq:gaussian-max-prob-err-overall}
concludes the proof.
\end{proof}

\subsubsection{Converse}

In this section, we prove the converse part of
Theorem~\ref{thrm:capacity-region-gaussian}, in other words we have
the following theorem:
\begin{theorem}
For any sequence of $\left(2^{nR_1^{ID}}, 2^{nR_2^{ID}},
2^{nR_1^{Data}}, 2^{nR_2^{Data}},n\right)$-RFID codes with
$P_e^{(n)} < \ep$, for any $\ep >0$, we have $\left( R_1^{ID},
R_2^{ID}, R_1^{Data}, R_2^{Data}\right) \in \cR_1$.
\label{thrm:gaussian-converse}
\end{theorem}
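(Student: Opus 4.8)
The plan is to decouple the converse into a BCC part and a MAC part, exploiting the fact that the overall error probability bound $P_e^{(n)}<\ep$ forces both $P_{e,BCC}^{(n)}$ and $P_{e,MAC}^{(n)}$ to be small (via \eqref{eq:cond-prob-err-overall}, which gives $\lambda_{BCC}^{w_1,w_2}\le\lambda_{w_1,w_2,m_1,m_2}$ and similarly for the MAC conditional error), so that one may invoke Fano's inequality separately on each part. For the BCC part, I would use Remark~\ref{rem:gaussian}(ii): the Gaussian BCC here is \emph{degraded} (since $N_1<N_2$ and the noises may be taken as $Z_2=Z_1+Z_1'$ with $Z_1'$ independent), and therefore its capacity region is the standard superposition-coding region of Bergmans/Cover. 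One then identifies an auxiliary $U$ via $U_k \eqdef f_k(W_2, \bY_1^{k-1})$ (or the standard single-letterization with $U$ carrying $W_2$'s information and $X$ carrying both), applies Fano to get $nR_2^{ID}\le I(W_2;\bY_2^n)+n\ep_n$ and $nR_1^{ID}\le I(W_1;\bY_1^n\mid W_2)+n\ep_n$, single-letterizes, introduces the power-split parameter $\alpha$ through $\frac1n\sum_k \mathrm{E}[X_{1,k}^2]\eqdef \bar\alpha P$, and uses the conditional-EPI (entropy-power inequality) exactly as in the classical degraded Gaussian broadcast converse to obtain \eqref{eq:gaussian-achievability-1}–\eqref{eq:gaussian-achievability-2} with $\ep\to0$; a concavity/continuity argument lets one replace the average power split $\bar\alpha$ by a fixed $\alpha\in[0,1]$.

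For the MAC part, I would condition on the (correct, high-probability) event $(\hW_1,\hW_2)=(w_1,w_2)$ so that each mobile unit is using a fixed codebook $\cC_{MAC}^{w_i}$ whose codewords obey the power constraints \eqref{eq:gaussian-power-const-2}–\eqref{eq:gaussian-power-const-3}, i.e. effective input powers $\le\alpha_1\alpha P$ and $\le\alpha_2(1-\alpha)P$ respectively. Then the channel $\bS^n=\bQ_1^n+\bQ_2^n+\bZ_3^n$ with $\bZ_3^n\sim\cN(0,N_3\bI)$ is a standard power-constrained Gaussian MAC, and the converse is the textbook one: Fano gives $nR_1^{Data}\le I(\bQ_1^n;\bS^n\mid\bQ_2^n)+n\ep_n$, $nR_2^{Data}\le I(\bQ_2^n;\bS^n\mid\bQ_1^n)+n\ep_n$, and $n(R_1^{Data}+R_2^{Data})\le I(\bQ_1^n,\bQ_2^n;\bS^n)+n\ep_n$; single-letterize, use that Gaussian maximizes differential entropy under a second-moment constraint and that the two codebooks are independent, and conclude \eqref{eq:gaussian-achievability-3}–\eqref{eq:gaussian-achievability-5} with $\ep\to0$. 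Since $\alpha$ is the \emph{same} split that governs the power delivered to the two units in the BCC phase (it appears in $P_1=\alpha P$, $P_2=(1-\alpha)P$), the $\alpha$ extracted from the BCC converse is exactly the one appearing in the MAC bounds, so the two halves glue together consistently and land in $\cR_1$ as defined in \eqref{eq:gaussian-capacity-region}.

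Assembling: from the two sets of single-letter bounds, taking $n\to\infty$ kills the Fano terms $\ep_n\to0$, and one obtains a rate quadruple satisfying all six inequalities of \eqref{eq:gaussian-capacity-region} for the particular $\alpha$ (and with $\alpha_1,\alpha_2$ the given system parameters), hence $(R_1^{ID},R_2^{ID},R_1^{Data},R_2^{Data})\in\cR_1$. I would note that the BCC and MAC converses are genuinely separable because the nested-codebook structure only \emph{restricts} the MAC encoders (each $M_i$ is drawn from $\cM_i^{\hW_i}$, a subset of the full index set), so any outer bound for the unrestricted Gaussian MAC remains valid here; the imperfection channels are absent by Remark~\ref{rem:gaussian}(i), already folded into $\alpha_1,\alpha_2$.

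The main obstacle I anticipate is the BCC half: making the degradedness rigorous in the presence of \emph{arbitrarily correlated} $(Z_1,Z_2)$ with only the marginals fixed, and then running the entropy-power-inequality step cleanly. The standard trick is that the converse only depends on the marginal $p(y_2\mid x)$, so one may \emph{replace} $Z_2$ by a stochastically degraded version $Z_1+\tilde Z$ (with $\tilde Z\sim\cN(0,N_2-N_1)$ independent of $Z_1$) without changing the individual error probabilities or the mutual-information quantities that appear — but this replacement has to be justified carefully, and the conditional EPI then needs the right conditioning (on $W_2$ and on the past) to yield the $\tfrac12\log(1+\alpha P/N_1)$ and $\tfrac12\log\bigl(1+\tfrac{(1-\alpha)P}{\alpha P+N_2}\bigr)$ forms. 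The MAC half and the final bookkeeping are routine by comparison.
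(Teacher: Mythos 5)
Your proposal is correct and follows essentially the same route as the paper's proof: factor the overall error probability so that $P_e^{(n)}<\ep$ forces both $P_{e,BCC}^{(n)}\le\ep$ and $P_{e,MAC}^{(n)}\le\ep$, then invoke Bergmans' EPI-based converse for the degraded Gaussian broadcast channel and the standard Fano/single-letterization converse for the power-constrained Gaussian MAC, with the common power split $\alpha$ tying the two halves together. The paper simply cites these two classical converses rather than unpacking them, whereas you sketch their internals (and correctly flag the marginal-dependence argument needed to handle the arbitrarily correlated noises), but the decomposition and the key lemmas are identical.
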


\begin{proof}
Proof relies on ideas from \cite{ber:74} for BCC part and
\cite{cov:06} for MAC part.

First of all, we have following
\begin{eqnarray}
P_e^{(n)} & = & 1- \Pr\left( \left[(\hW_1,\hW_2) = (W_1,W_2)\right] \wedge \left[(\hM_1,hM_2) = (M_1,M_2)\right] \right), \nonumber\\
 & = & 1- \Pr\left( (\hW_1,\hW_2) = (W_1,W_2) \right) \Pr\left((\hM_1,\hM_2) = (M_1,M_2)|(\hW_1,\hW_2) = (W_1,W_2)
 \right).\label{eq:gaussian-converse-prob-err-1}
\end{eqnarray}
Using \eqref{eq:gaussian-converse-prob-err-1} and noting that
$P_e^{(n)} \leq \ep$, we have
\begin{equation*}
\left( 1- \Pr\left( (\hW_1,\hW_2) \neq (W_1,W_2)
\right)\right)\left( \Pr\left((\hM_1,\hM_2) \neq
(M_1,M_2)|(\hW_1,\hW_2) = (W_1,W_2)\right)\right),
\end{equation*}
which implies
\begin{equation}
P_{e,BCC}^{(n)} = \Pr\left( (\hW_1,\hW_2) \neq (W_1,W_2) \right)
\leq \ep, \label{eq:gaussian-converge-prob-err-BCC}
\end{equation}
and
\begin{equation}
P_{e,MAC}^{(n)} = \Pr\left( (\hM_1,\hM_2) \neq (M_1,M_2) |
(\hW_1,\hW_2) = (W_1,W_2) \right) \leq \ep,
\label{eq:gaussian-converge-prob-err-MAC}
\end{equation}
Next, \eqref{eq:gaussian-converge-prob-err-BCC} enables us to use
the result of \cite{ber:74} for BCC case, hence we state that
\begin{eqnarray}
R_{1}^{ID} & \leq & \frac{1}{2}\log\left( 1 + \frac{\alpha P}{N_1}\right),\label{eq:gaussian-converse-BCC-eq-1}\\
R_{2}^{ID} & \leq & \frac{1}{2}\log\left( 1 + \frac{(1 -
\alpha)P}{\alpha P +
N_2}\right),\label{eq:gaussian-converse-BCC-eq-2}
\end{eqnarray}
for any $0 \leq \alpha \leq 1$.

Further, \eqref{eq:gaussian-converge-prob-err-MAC} enables us to use
the result of \cite{cov:06} for MAC case, hence we state that
\begin{eqnarray}
R_1^{Data} & \leq & \frac{1}{2}\log\left( 1 + \frac{\alpha_1 \alpha P}{N_3}\right), \label{eq:gaussian-converse-MAC-eq-1}\\
R_2^{Data} & \leq & \frac{1}{2}\log\left( 1 + \frac{\alpha_2 (1 - \alpha) P}{N_3}\right), \label{eq:gaussian-converse-MAC-eq-2}\\
R_1^{Data} + R_2^{Data} & \leq & \frac{1}{2}\log\left( 1 +
\frac{\alpha_1 \alpha P + \alpha_2 (1 - \alpha) P}{N_3}\right).
\label{eq:gaussian-converse-MAC-eq-3}
\end{eqnarray}

Combining \eqref{eq:gaussian-converse-BCC-eq-1},
\eqref{eq:gaussian-converse-BCC-eq-2},
\eqref{eq:gaussian-converse-MAC-eq-1},
\eqref{eq:gaussian-converse-MAC-eq-2} and
\eqref{eq:gaussian-converse-MAC-eq-3} we conclude that for any
$\left(2^{nR_1^{ID}}, 2^{nR_2^{ID}}, 2^{nR_1^{Data}},
2^{nR_2^{Data}},n\right)$-RFID codes with $P_e^{n}$, we have $\left(
R_1^{ID}, R_2^{ID}, R_1^{Data}, R_2^{Data}\right) \in \cR_1$, which
concludes the proof.
\end{proof}

\section{Conclusion}
\label{sec:conclusion}
In this paper, we studied the RFID capacity problem by modeling the
underlying structure as a specific multiuser communication system that
is represented by a cascade of a BCC and a MAC. The BCC and MAC parts
are used to model communication between the RFID reader and the mobile
units, and between the mobile units and the RFID reader,
respectively. To connect the BCC and MAC parts, we used a ``nested
codebook'' structure. We further introduced imperfection channels for discrete alphabet case
as well as additional power limitations for continuous alphabet additive Gaussian noise 
case to accurately model the physical medium of the RFID system. We provided the achievable rate region in
the discrete alphabet case and the capacity region for the continuous
alphabet additive Gaussian noise case. Hence, overall, we characterized the maximal achievable
error free communication rates for any RFID protocol for the latter case.


\begin{thebibliography}{99}
\bibitem{cov:06}
T.~M.~Cover and J.~A.~Thomas, \emph{Elements of Information Theory},
2nd Edition, New York: Wiley, 2006.

\bibitem{gam-meu:81}
A.~El~Gamal and E.~C.~Van der Meulen, ``A Proof of Marton's Coding
Theorem for the Discrete Memoryless Broadcast Channels'', \emph{IEEE
Trans. Inf. Theory,} vol.~IT--27, no.~1, pp.~120--122, January~1981.

\bibitem{ber:74}
P.~R~Bergmans, ``A Simple Converse for Broadcast Channels with
Additive White Gaussian Noise'', \emph{IEE Trans. Inf. Theory,}
vol.~IT--20, no.~2, pp.~279--280, March~1974.

\bibitem{Med:04}
M. Medard, J. Huang, A. J. Goldsmith, S. P. Meyn, T. P. Coleman, ``Capacity of Time-slotted ALOHA Packetized Multiple-Access Systems over the AWGN Channel,'' 
\emph{IEEE Trans. Inf. Theory,}
vol.~IT--3, no.~2, pp.~486--499, March~2004.

\bibitem{Zam:02} 
R. Zamir, S. Shamai, U. Erez, ``Nested Linear/Lattice Codes for Structured Multiterminal Binning,'' \emph{IEEE Trans. Inf. Theory,}
vol.~IT--48, no.~6, pp.~1250--1276, June~2002.

\bibitem{Bar:03}
R. J. Barron, B. Chen, G. W. Wornell, ``The Duality Between
Information Embedding and Source Coding With Side Information and Some
Applications,'' \emph{IEEE Trans. Inf. Theory,}
vol.~IT--48, no.~6, pp.~1159--1180, May~2003.

\end{thebibliography}
\end{document}